\newtheorem{theorem}{Theorem}
\newtheorem{proposition}[theorem]{Proposition}
\pgfplotsset{compat=1.3}
\newcommand{\needcite}[1]{\textcolor{red}{[Ref needed]}}
\newcommand{\B}{\mathbb{B}}
\newcommand{\R}{\mathbb{R}}
\newcommand{\N}{\mathbb{N}}
\newcommand{\T}{\intercal}
\begin{document}

\title{Power network optimization: a quantum approach}

\author{Giuseppe Colucci}
\affiliation{Quantum Quants, Rotterdam, The Netherlands}

\author{Stan van der Linde}
\affiliation{TNO, The Hague, The Netherlands}

\author{Frank Phillipson}
\affiliation{TNO, The Hague, The Netherlands}
\affiliation{Maastricht University, School of Business and Economics, The Netherlands}

\date{\today}

\begin{abstract}

Optimization of electricity surplus is a crucial element for transmission power networks to reduce costs and efficiently use the available electricity across the network. In this paper we showed how to optimize such a network with quantum annealing. First, we define the QUBO problem for the partitioning of the network, and test the implementation on purely quantum and hybrid architectures. We then solve the problem on the D-Wave hybrid CQM and BQM solvers, as well as on classical solvers available on Azure Quantum cloud. Finally, we show that the hybrid approaches overperform the classical methods in terms of quality of the solution, as the value of the objective function of the quantum solutions is found to be always lower than with the classical approaches across a set of different problem size. 
\end{abstract}

\maketitle
\onecolumngrid
\section{Introduction}\label{sec:introduction}
Across many different fields, optimization is a crucial element used for solving problems.
From scientific to industrial applications, optimization is used to generally boost performance and reduce risks.

In business operations, optimization models play an important role in decisions making, from a financial (e.g., decision on investments) as well as from a non-financial perspective (e.g. assessment of regulatory risks or decrease of ecological footprint). 

In the recent years, the energy sector has seen a global shift toward clean energy with wind and solar power, where companies accelerate ambitious renewable energy goals around the world. Therefore, the energy market is gradually shifting from centrally planned, financed and operated electricity systems to a more diffuse, increasingly decentralized and real-time model where the planning, finance and operation of the system is shared between an increasing number of parties. In fact, in a decentralized ecosystem new innovations enter the utility space, from prosumers and photovoltaic panels (PVs) to batteries and electric vehicles (EVs).

Decentralization can bring benefits but it also increases the complexity of electricity systems. In fact, on one hand it allows for increasingly better use of renewable energy sources as well as combined heat and power, reduces fossil fuel use and increases ecoefficiency \cite{soininen2021law}. On the other hand, the rising number of energy players and entanglement of the underlying energy network increases the challenges in optimizing costs and operations of the entire network.

Several methods have been implemented to improve the efficiency as well as the resiliency of such networks \cite{Shahrokhi1990TheMC,Edmonds2001TheoreticalII,Bai2008SemidefinitePF,Nguyen12,Yamagata2015ProposalFA,Tanjo2016GraphPO,Ghaddar2016OptimalPF,Bella2020SupervisedMC,LaBella2021SupervisedMP,Safdarian2021CoalitionalGT,Hadjidimitriou2021MathematicalOF}.
Graph theory has been proposed as a methodology to efficiently identify optimal design of electricity networks \cite{Nguyen12,Yamagata2015ProposalFA,Tanjo2016GraphPO}. 

Recent advances in both quantum hardware and algorithm development have made it possible to solve several optimization problems on modern quantum computers, with particular success for problems that can be mapped onto graphs. In general, quantum computing has the potential to solve optimization problems \cite{Hen2016DriverHF,PhysRevApplied.5.034007,Zahedinejad2017CombinatorialOO,Hadfield2019FromTQ}. Combinatorial optimization problems (especially NP-hard problems) are of particular interest, since for many of these problems efficient classical algorithms are unknown. 

One of the quantum computing models which can be used for finding solutions to combinatorial optimization problems is the Adiabatic Quantum Optimization model (AQO). AQO refers to the use of the adiabatic theorem in quantum mechanics \cite{griffiths2018introduction, Messiah2014QuantumM} to adiabatically move towards the ground state of an interaction Hamiltonian onto which an optimization problem is mapped. The architecture which approximately implements the adiabatic quantum computer paradigm is the quantum annealer \cite{McGeoch2020TheoryVP}. Differently from a quantum gate architecture, on which a computation is defined by applying separate gates (operators) to each single qubit, a quantum annealer defines a Hamiltonian operator which acts simultaneously on all qubits in the ground state and adiabatically changes this operator towards a Hamiltonian on which the cost function of a specific optimization problem is mapped \cite{Lucas_NP_Ising}. As a result of the quantum mechanics adiabatic theorem, the theoretical resulting state after the adiabatic evolution will be the ground state of the final Hamiltonian, i.e. the global minimum of the optimization problem mapped onto the final Hamiltonian.

In this paper, we focus on solving the power network partitioning problem on a quantum annealing architecture. 
We choose to follow the graph partitioning problem formulation presented in \cite{Tanjo2016GraphPO}, where a graph partitioning approach is used to calculate the optimal transfer of electricity surplus within a power grid. First, we test our graph partitioning model and implementation with synthetic data sets on quantum and hybrid implementations. Then we apply the hybrid approaches to a large-size problem, optimizing the electricity surplus of the transmission power network of Germany. Thus, we solve the graph partitioning problem for this network with both hybrid and classical approaches. 

This article is structured as follows. In \autoref{sec:qubo-formulation} we outline the mathematical framework at the basis of the optimization problem we aim to solve. Then, in \autoref{sec:graph-partitioning-problem-definition} we define the power network optimization problem, with and without electricity sharing among nodes of the network. In \autoref{sec:hamiltonian} we express the problem in the QUBO formulation, which will be used in the actual implementation and run on the quantum and classical solvers. In \autoref{sec:numerical-results} we present the results of the optimization of the German transmission power network with the different quantum and classical solvers. Finally, in \autoref{sec:conclusions} we summarize our findings and outline possible future developments.

\section{QUBO Formulation and mathematical toolbox}\label{sec:qubo-formulation}
In order to solve combinatorial optimization problems with quantum annealing hardware, one must map the cost function of the optimization problem to a Hamiltonian.
The most common way to do this is to use the Quadratic Unconstrained Binary Optimization (QUBO) formulation \cite{venegas2018cross}.
The QUBO formulation is easily transformed to describe a Hamiltonian and therefore most software suites which drive the quantum annealing hardware can use the QUBO formulation directly as input \cite{venegas2018cross}. 
The QUBO formulation has the following form:
\begin{equation}
    \min_{\bm{x}\in\{0,1\}^n} \bm{x}^\T Q \bm{x},
\end{equation}
where Q is a real upper triangular matrix and $\bm{x}$ and its transpose $\bm{x}^\T$ are binary vectors.

Most problems are not naturally expressed in the QUBO formulation, which requires the recasting of the problem.
Many problems have already been reformulated into the QUBO form \cite{lucas2014ising}.
For those problems that are not yet in such a list, techniques have been distilled to rewrite combinatorial optimization problems into the QUBO form \cite{glover2022quantum}.
The specific techniques used in this paper are: (1) including equality constraints, (2) including inequality constraints and (3) pairwise degree reduction.

\subsection{Including Equality Constraints} \label{sec:eq_penalty}
The QUBO formulation does not allow for any constraints. Nonetheless, many real world problems contain (in)equality constraints.
These constraints are included into the objective function by means of a penalty function \cite{glover2022quantum}.
A penalty function for a constraint should be zero when the constraint is met and positive otherwise.

For the sake of the explanation, let us construct such a penalty function for the linear equality constraint shown in \autoref{eq:lin_eq}
\begin{equation}
\label{eq:lin_eq}
 \bm{a}^\T\bm{x} = b,
\end{equation}
where $\bm{a}$ is a real constant vector and $b$ is a real constant.
If we take the left-hand side, subtract $b$ from it and square it we get
\begin{equation}
P(\bm{x}) = \left(\bm{a}^\T\bm{x} - b\right)^2.
\end{equation}
Note that $P(\bm{x})$ is non-negative and that it is zero if and only if the constraint from \autoref{eq:lin_eq} is met.
Therefore, the addition of this function to the objective function will favour solutions that comply with the constraints in \autoref{eq:lin_eq} (when assuming minimization).

\subsection{Including Inequality Constraints}\label{sec:ineq_penalty}
Including inequality constraints into the objective function requires a bit more work compared to equality constraints.
A general linear inequality constraint has the following form:
\begin{equation}
\label{eq:lin_ineq}
 \bm{a}^\T\bm{x} \le b.
\end{equation}
If $\bm{a}$ and $b$ are integer valued, then this equation is equivalent to an equality constraint with the inclusion of slack variables \cite{glover2022quantum}.
Subsequently, the latter equality constraint can be included into the objective function with the penalty function shown in the previous section.

However, in many real life problems (including the one proposed in this paper), $\bm{b}$ and $a$ in \autoref{eq:lin_ineq} are real valued.
In this case we cannot transform the inequality constraint to an equality constraint.
Instead we directly construct an (approximate) penalty function from the inequality constraint.
We start by finding a (possible negative) lower bound of the left hand side $c\in \R$: 
\begin{equation}
\label{eq:lin_ineq2}
 c\le\bm{a}^\T\bm{x} \le b.
\end{equation}
In general, $c$ can be easily computed. If $\bm{a}$ contains at least one negative element, then $c$ is the sum of all negative elements of $\bm{a}$, otherwise it is $0$.
Next, we want to approximate \autoref{eq:lin_ineq2} using $K$ auxiliary binary variables $z_i$.
The \emph{approximate} penalty function of \autoref{eq:lin_ineq2} is given by
\begin{equation}
\label{eq:lin_ineq_penalty}
P(\bm{x},\bm{z}) = 
\left( \frac{2^K-\frac{1}{2}}{b-c}\left(\bm{a}^\T\bm{x} - c\right) - \sum_{i=0}^{K-1}2^i z_i\right)^2.
\end{equation}
For any $\bm{x}$ that complies with the constraint in \autoref{eq:lin_ineq2} there exists a $\bm{z}$ such that $P(\bm{x},\bm{z})\le \frac{1}{4}$.
For all other $\bm{x}$ we have $P(\bm{x},\bm{z})>\frac{1}{4}$ for all $\bm{z}$.
Hence, when we add this penalty function to the objective function, solutions that meet the constraint of \autoref{eq:lin_ineq} are favoured over solutions that violate the constraint.
A more rigorous in-depth analysis and the derivation of \autoref{eq:lin_ineq_penalty} can be found in \autoref{sec:proofs}.

\subsection{Degree Reduction}\label{sec:degree_reduction}
It is not uncommon that problem formulations are not quadratic, but contain higher order terms.
In such cases, the higher order terms can be reduced to quadratic terms by adding auxiliary binary variables.
For pairwise degree reduction, we substitute a quadratic $x_ix_j$ term by a new variable $z$ and we add the following penalty function to the objective
\begin{equation}
M(x_i,x_j,z) = x_ix_j -2z(x_i+x_j) + 3z.
\end{equation}
Note that $M(x_i,x_j,z)$ is non-negative and $M(x_i,x_j,z)=0$ if and only if $x_ix_j=z$.
So for example $x_1x_2x_3$ would become $x_1z + \lambda M(x_2,x_3, z)$, where $\lambda > 0$ is a Lagrange multiplier.

\section{Graph partitioning problem for power grids}\label{sec:graph-partitioning-problem-definition}

Graphs models are extensively used across a broad span of disciplines to model different applications. 

Specifically, Graph Partitioning (GP), i.e., the determination of communities, or clusters, within the nodes of a graph has proven to be a useful method in network analysis \cite{Bichot2013GraphPB}. In fact, GP approaches emerged to reduce the complexity of applications generally involving graph sizes arbitrarily large compared to the computational resources at hand, by dividing (\textit{partitioning}) the graph into smaller sub-graphs or sub-problems, and thus increase the computational performance \cite{Kobayashi2011}. 

Applications of GP include physical network design \cite{Rosato2021HeuristicGP}, VLSI design \cite{Gottschalk2016VlsiPD}, telecommunication network design \cite{telecommunicationGraphPartitioning}, load balancing of high performance computing (HPC) codes \cite{10.1007/978-3-319-27308-2_33}, distributed sparse matrix-vector multiplication \cite{SparseMatrixMultiplicationGraphPartitioning}, biological \cite{Navlakha2010ExploringBN} and social networks \cite{Tsourakakis2014FENNELSG,Lopes2020APG}.

GP is an NP-hard problem (as a decision problem) \cite{HyafilRivest1973,GAREY1976237}. Algorithms to solve GP problems include exact algorithms, spectral partitioning, geometric partitioning, flow computations, etc. \cite{Bulu2016RecentAI}. Exact algorithms are often used to solve small-size problems, whereas heuristic algorithms are needed for larger problems.

Recently, quantum computers have been used to model and solve the graph partitioning problem, see e.g., \cite{UshijimaMwesigwa2017GraphPU,Pramanik2020QuantumAssistedGC,Negre2020DetectingMC,Chukwu2020ConstrainedoptimizationAD}. 
In this paper, we shall focus on solving the graph partitioning of a power network with quantum annealing, which uses quantum physics to find low-energy states of a problem and which can be mapped to the optimal or near-optimal solution of the optimization problem.

In the following we outline the mathematical formulation of the graph partitioning problem for a power grid, in which the performance metric is given by the \textit{power grid cost}. We follow the formulation presented in \cite{Tanjo2016GraphPO}. 
We first consider the case of no electricity sharing within partitions and then extend this formulation to allow for electricity sharing.
Finally, we shall express the graph partitioning problem in the Hamiltonian/QUBO framework needed to implement and solve the problem on a quantum annealer.

\subsection{Mathematical formulation without electricity sharing}\label{subsec:graph-partitioning-math-formulation}
In \cite{Tanjo2016GraphPO}, the graph partition of power grids is modeled in the following manner. 
Suppose we have a graph $G=(V,E)$, where the vertices $V$ represent geographical areas and the the edges $E$ represent transmission lines between two areas.
The total electricity surplus in an area is then encoded in the weight of the vertices associated to the given area. 

Let $P$ be the number of partitions, or clusters, and $N=|V|$ be the number of vertices in the graph.
Then the binary variable $v_{np}\in \B$ is $1$ when vertex $n$ is in partition $p$ and $0$ otherwise.
Let $k$ denote the threshold of self-sufficiency of the partition, as defined in \cite{Tanjo2016GraphPO}.
Suppose $\alpha,\beta \in \R^+$ are coefficients of the cost for transmission lines within a single cluster and between clusters respectively. Then the CQM becomes:
\begin{eqnarray}
\min_{\bm{v}\in\B^{PN}}  \sum_{p=1}^P\alpha \left(\sum_{n=1}^Nv_{np}\right)^2 + \beta\left(|E|-\sum_{\{n,m\}\in E}v_{np}v_{mp}\right) \label{eq:obj_original}\\
\text{subject to:}\qquad\qquad           \sum_{p=1}^P v_{np}  =1, \quad   n=1,\ldots,N \label{eq:onehot_constraint}\\
\sum_{n=1}^Nv_{np}w_n - k\sum_{n=1}^Nv_{np}\le 0, \quad p=1,\ldots,P.\label{eq:balancing_constraint}
\end{eqnarray}
\autoref{eq:obj_original} is called the objective function of the optimisation problem, which represent the cost of constructing a particular partition.
The constraints described in \autoref{eq:onehot_constraint} is called the one-hot constraint.
This constraint enforces each vertex to be in one and one only cluster.
Lastly, the constraint in \autoref{eq:balancing_constraint} is called the balancing constraint.
The latter ensures that the average surplus of a cluster is below the threshold $k$, so that the partition fulfills the \emph{self-sufficiency} property.

\subsection{Mathematical formulation with electricity sharing}\label{sec:electricity-sharing}
The mathematical formulation of the problem with electricity sharing is very similar to the problem outlined in the previous section.
Both the objective function and one-hot constraint remain the same.
The key difference is in the balancing constraint, which will allow for the sharing.
This is done by adding a flow function between each partition $F(\bm{v};p,q)$, such that $F(\bm{v};p,q)=-F(\bm{v};q,p)$.
The resulting mathematical formulation is then given by
\begin{eqnarray}
\min_{\bm{v}\in\B^{PN}}  \sum_{p=1}^P\alpha \left(\sum_{n=1}^Nv_{np}\right)^2 + \beta\left(|E|-\sum_{\{n,m\}\in E}v_{np}v_{mp}\right) \label{eq:obj_original2}\\
\text{subject to:}\qquad\qquad           \sum_{p=1}^P v_{np}  =1, \quad   n=1,\ldots,N \label{eq:onehot_constraint2}\\
\sum_{n=1}^Nv_{np}w_n + \sum_{q=1}^PF(\bm{v};p,q)- k\sum_{n=1}^Nv_{np}\le 0, \quad  p=1,\ldots,P.\label{eq:balancing_constraint2}
\end{eqnarray}

\section{QUBO formulation of graph partitioning}\label{sec:hamiltonian}
In this paper we solve the GP problem for power grids using the D-Wave CQM solver, D-Wave BQM solver, quantum annealing and Microsoft Azure solvers.
The CQM solver can take the problem as given in the previous sections.
However, to use the other solvers, the problem must be rewritten to the QUBO formulation.
The optimisation problem described in \autoref{sec:graph-partitioning-problem-definition} contains constraints, which are not allowed in the QUBO formulation.

To overcome this, the constraints will be included into the objective function with the use of penalty terms to produce the following form:
\begin{equation}
\min H(\bm{x}) + \sum_{i=1}^{\#\text{constraints}}\lambda_iP_i(\bm{x}).
\end{equation}
We will first show the QUBO of the GP problem \emph{without} electricity sharing, followed by the QUBO formulation of the GP problem \emph{with} electricity sharing.

\subsection{QUBO formulation of graph partitioning without electricity sharing}\label{sec:qubo-hamiltonian}
First, we derive the Hamiltonian for the objective function.
The objective function shown in \autoref{eq:obj_original} consists of binary variables and is quadratic in nature.
Hence, this objective function already complies with QUBO formalism and its  Hamiltonian is given by
\begin{equation}
    \label{eq:obj_ham}
        H(\bm{v}) = \sum_{p=1}^P\alpha \left(\sum_{n=1}^Nv_{np}\right)^2 + \beta\left(|E|-\sum_{(n,m)\in E}v_{np}v_{mp}\right).
\end{equation}

Secondly, we will derive a penalty function for the one-hot constraint shown in \autoref{eq:onehot_constraint}.
Since it is a linear equality constraint, we can use the technique described in \autoref{sec:eq_penalty} to produce
\begin{equation}
\label{eq:onehot_penalty}
P_{\text{oh}}(\bm{v}) = \sum_{n=1}^N\left(\sum_{p=1}^P v_{np}  - 1\right)^2.
\end{equation}

Lastly, we will derive a penalty function for the balancing constraint shown in \autoref{eq:balancing_constraint}.
To do so, we will use the technique described in \autoref{sec:ineq_penalty} for which we will need a lower bound on the constraint.
A lower bound can be found by taking the summation over all negative $w_n-k$ values, i.e.,
\begin{equation}
    \label{eq:nonnegative_constant}
    c = \frac{1}{2}
    \sum_{n=1}^N\left(w_n-k-|w_n-k|\right).
\end{equation}
This lower bound $c$ and the upper bound of $0$ produces the following penalty function:
\begin{equation}
\label{eq:balance_penalty1}
P_\text{bc}(\bm{v},\bm{x}; K) =  \sum_{p=1}^P\left(\frac{2^K-\frac{1}{2}}{-c} \left(-c + \sum_{n=1}^Nv_{np}(w_n - k)\right) - \sum_{a=0}^{K-1}2^ax_{ap}\right)^2,
\end{equation}
where $x_{ap}$ are the auxiliary (slack) variables.

The final QUBO Hamiltonian representation of the whole problem is then
\begin{equation}
\label{eq:hamiltonian}
    \min_{(\bm{v},\bm{x})\in \B^{PN}\times\B^{PK}}
    H(\bm{v})
    +\lambda_\text{oh}P_\text{oh}(\bm{v})
    +\lambda_\text{bc}P_\text{bc}(\bm{v},\bm{x}; K),
\end{equation}
where $\lambda_\text{oh}$ and $\lambda_\text{bc}$ are non-negative constants (Lagrange multipliers) and $K$ is a positive integer.

From \autoref{eq:hamiltonian}, it is evident that the number of variables for the problem with $N$ nodes, $P$ partitions and  the hyperparameter $K$ is $P(N + K)$.

\subsection{QUBO formulation of graph partitioning with electricity sharing}
The mathematical formulation of the GP problem with sharing has the same objective function and one-hot constraint. Therefore, we will use the same objective Hamiltonian and penalty function as in the previous section.
What remains is constructing a penalty for the new balancing constraint.
However, as will become clear, this altered balancing Hamiltonian will need new auxiliary variables for which the behaviour will be enforced by an additional constraint.

Let us start by defining a flow from partition $p$ to partition $q$.
We will construct a flow which can share everything from $p$ to $q$, everything from $q$ to $p$ or nothing at all.
This flow function has the following form:
\begin{equation}
\label{eq:flow}
   F(\bm{v}, \bm{f}; p, q) = \sum_{\{n,m\}\in E}w_{nm}(f_{nmp} - f_{nmq})(v_{np}v_{mq} + v_{mp}v_{nq}),
\end{equation}
where $f_{nmp}$ are new binary variables determining the flow between partition $p$ and $q$.
Note that $v_{np}v_{mq} + v_{mp}v_{nq}$ in \autoref{eq:flow} is 1 if $\{n,m\}$ is an edge between partition $p$ and $q$ and is zero otherwise.

Note that $F$ is indeed a flow, since
\begin{align}
    F(\bm{v}, \bm{f}; q, p) &=\sum_{\{n,m\}\in E}w_{nm}(f_{nmq} - f_{nmp})(v_{nq}v_{mp} + v_{mq}v_{np})\\
    &=-\sum_{\{n,m\}\in E}w_{nm}(f_{nmp} - f_{nmq})(v_{np}v_{mq} + v_{mp}v_{nq})\\
    &=-F(\bm{v}, \bm{f}; p, q).
\end{align}
Therefore, we also have the property $F(\bm{v}, \bm{f}; p, p)=0$.

If we include $F$ from \autoref{eq:flow} into the balancing constraint, the constraint becomes cubic in nature.
Hence, we will have to reduce the degree of this constraint by using the technique described in \autoref{sec:degree_reduction}.
Let $\mathcal{P} =\{(p,q)\in \{1,\ldots,P\}^2| p\not=q\}$, then we reduce the degree by adding auxiliary binary variables with the following properties
\begin{align}
a_{nmpq} = v_{np}v_{mq} &, \forall \{n,m\}\in E, \forall (p,q) \in \mathcal{P},\\
y_{nmpq} = f_{nmp}a_{nmpq}  &, \forall \{n,m\}\in E, \forall (p,q) \in \mathcal{P},\\
z_{nmpq} = f_{nmp}a_{nmqp}  &, \forall \{n,m\}\in E, \forall (p,q) \in \mathcal{P}.
\end{align}
Hence, $F(\bm{v}, \bm{f}; q, p)$ becomes
\begin{align}
 F(\bm{v}, \bm{f}; p, q) &=  \sum_{\{n,m\}\in E}w_{nm}(f_{nmp} - f_{nmq})(v_{np}v_{mq} + v_{mp}v_{nq})\\
 &=\sum_{\{n,m\}\in E}w_{nm}(f_{nmp} - f_{nmq})(a_{nmpq} + a_{nmqp})\\
 &=\sum_{\{n,m\}\in E}w_{nm}(y_{nmpq}+z_{nmpq} - y_{nmqp}-z_{nmqp})\\
 &= G(\bm{y},\bm{z};p,q)
\end{align}

The penalty function of the balancing constraint with electricity sharing is then given by
\begin{equation}
P_\text{bc}(\bm{v}, \bm{x}, \bm{y}, \bm{z}) =  \sum_{p=1}^P\Bigg(\frac{2^K-\frac{1}{2}}{c} \Big(c + \sum_{n=1}^Nv_{np}(w_n - k)\Big)
+\sum_{\substack{q=1 \\ q\not=p}}^PG(\bm{y},\bm{z}; q,p)
- \sum_{a=0}^{K-1}2^ax_{ap}\Bigg)^2~,
\end{equation}
where $c = \frac{1}{2}\sum_{n=1}^N\left(|w_n-k|- w_n+k\right) + \sum_{\{n,m\}\in E}w_{nm}.$

Lastly, we apply the penalty function for degree reduction as described in \autoref{sec:degree_reduction} to imply the required behaviour of the auxiliary variables:
\begin{align}
P_{aux}(\bm{v},\bm{f},\bm{y},\bm{z},\bm{a}) = \sum_{\{n,m\}\in E}\sum_{p,q\in\mathcal{P}}M(v_{np},v_{mq}, a_{nmpq}) + M(f_{nmp},a_{nmpq}, y_{nmpq}) + M(f_{nmp},a_{nmqp}, z_{nmpq}).
\end{align}

With the new penalty terms, the QUBO optimization problem for graph partitioning with electricity sharing is given by:
\begin{equation}
\min_{\bm{v},\bm{x},\bm{f}, \bm{y},\bm{z}, \bm{a}}
H(\bm{v})
+\lambda_\text{oh}P_\text{oh}(\bm{v})
+\lambda_\text{bc}P_\text{bc}(\bm{v}, \bm{x}, \bm{y}, \bm{z}; K)
+\lambda_\text{aux}P_{aux}(\bm{v},\bm{f},\bm{y},\bm{z},\bm{a}).
\end{equation}
The total number of variables of the QUBO formulation is less straightforward compared to the model without electricity sharing.
We shall analyse this by counting the size of all the binary vectors $\bm{v},\bm{x},\bm{f}, \bm{y},\bm{z}$ and $\bm{a}$.
The size of $\bm{v}$ and $\bm{x}$ is the same as for the problem without sharing, $PN$ and $PK$ respectively.
For each edge there is a $f$ variable for each partition, hence the size of $\bm{f}$ is $P|E|$.
Lastly, the size of $\bm{a}$, $\bm{y}$ and $\bm{z}$ is given by $\mathcal{P}$, which is $|E|(P^2-1)$.
Therefore, the total number of variables for the QUBO with electricity sharing is $P(N+K+|E|) + 3|E|(P^2-1)$.
Hence, when the network graph is not sparse, the number of variables is much larger for the model with sharing.

\subsection{QUBO Hyperparameters}\label{sec:qubo-hyperparameters}
In the construction of the QUBO in the previous sections, the hyper-parameters $K$ and $\lambda_i$ were introduced.
These hyper-parameters should be determined before solving the problem and comparing results.
The $K$ parameter represent the level of precision we desire in the approximation of the balancing constraint, where higher $K$ corresponds to a higher precision.
A more in-depth discussion on the inequality constraints and the effects on $K$ can be found in \autoref{sec:proofs}.

The other hyper-parameters, $\lambda_i$, were determined by a grid-search.
In this grid-search each grid point was evaluated using short runs of simulated annealing.
For each grid point, two measures of quality were calculated: (i) the original objective value of \autoref{eq:obj_original} and (ii) the number of constraints violated.
The best grid point on the grid is then the point were this objective is the lowest, whilst no constraints were violated.
To minimize the computation time, we performed the grid-search in two stages: (i) a logarithmic stage and (ii) a linear stage.
In the logarithmic stage a logarithmic grid was searched quickly to determine to order of magnitude of each $\lambda_i$.
Next, a linear grid was searched to more precisely determine the best value of $\lambda_i$.
Using this method we could quickly find suitable hyper-parameters $\lambda_i$.

\subsection{Implementation on D-Wave architecture}\label{sec:implementation}
The D-Wave system \footnote{\url{https://www.dwavesys.com/}} is a hardware heuristic that minimizes Ising objective functions using a physically realized version of quantum annealing. Due to the mapping between QUBO and Ising variables \cite{Lucas_NP_Ising}, every QUBO problem can be translated to an Ising model which can then be embedded onto a D-Wave system.


The implementation of the Hamiltonian in \autoref{eq:hamiltonian} has been written in Python and run on both the D-Wave Advantage\textsuperscript{TM} QPU and D-Wave’s Leap hybrid solver service, which uses the D-Wave Advantage\textsuperscript{TM} system as a back-end.
The hybrid solver contains a portfolio of heuristic solvers that leverage quantum and classical methods to solve problems much larger than can fit on Advantage quantum systems.
To date, the service includes three solvers: (i) Leap Hybrid Solver, to solve binary quadratic models (BQMs); (ii) Leap Hybrid discrete quadratic model (DQM) solver, for problems on categorical variables; and (iii) Leap Constrained Quadratic model (CQM) solver, which extend the previous two and allows for expressing constraints arithmetically.

For the numerical results we focus on the case without electricity sharing.

\section{Numerical results}\label{sec:numerical-results}
The goal of the paper is to determine the performance of quantum annealing assisted solvers (BQM and CQM) in solving the graph partitioning problem for energy grids in relation to other approaches. In particular, for this comparison we consider the solvers available in the Microsoft Quantum-inspired optimization (QIO) provider \footnote{https://learn.microsoft.com/en-us/azure/quantum/provider-microsoft-qio} present on the Azure cloud. 

\subsection{Testing CQM and BQM implementations}\label{sec:testing-results}
To gain confidence in both our implementation and derivation of the CQM and BQM models, they were first tested on smaller problems.
For this purpose, we designed two test.

Firstly, the implementation was tested.
This was done by constructing a square graph with known weights.
Because of the size of the problem, it was possible to write both the CQM and BQM models out by hand.
These were then compared to the CQM and BQM models of our implementation and were found to be equal.
Hence, we gained confidence in the implementation of our model.

Secondly, the equality of the optimal points of the CQM and BQM in practice was tested.
In theory, the optimal point of the CQM and BQM should give the same value when plugged into the original objective function \autoref{eq:obj_original}.
We tested this by means of brute force algorithms.
We tested equality of the original objective for 100 randomly generated graphs with 2 to 8 nodes.
All 700 tested problems gave a solution with equal objective value for both BQM and CQM.
Hence, we gained confidence on the equivalence of the optimal points for the BQM and CQM.

\subsection{Synthetic data set}

With the confidence in the implementation of BQM and CQM, the performance of the BQM using quantum annealing on a real QPU was tested.
To investigate the performance, a data set was created by connecting two cliques with a single edge (see \autoref{fig:clique_graphs}).
In this way, a graph with two clearly defined partitions is created (the two cliques).
Next, we added a surplus to nodes of graph.
The surplus we added has two key properties:
\begin{enumerate}
    \item The average surplus in each clique is $0.45$.
    \item More then 1/3 of all possible partitions violate the balancing constraint with a threshold $k=0.5$.
\end{enumerate}
Properties (1) ensures that the balancing constraint is not violated when the two cliques are selected as partitions, while property (2) ensures that there exists combinations of variables were the balancing constraint can be violated.

The chosen data set has two key advantages; it has a variable size and the optimum is known beforehand.
Hence, we can determine if we find the global optimum for different problem sizes.
In \cite{Tanjo2016GraphPO} two different sets of cost parameters are considered, $(\alpha, \beta)=(1,1)$ and $(\alpha, \beta)=(1,10)$.
Since the focus of the present analysis lies in a proof of concept of the model on the QPU, the specific value of these parameters is irrelevant.
Hence, we choose $\alpha$ and $\beta$ to be respectively 1 and 10.
The threshold $k$ was set to $0.5$.

The problem was tested directly on the D-Wave Advantage system.
The annealing time was set to \unit[1000]{$\mu$s} and the number of reads was set to 500.
All other settings were left in the their defaults.
The problem was tested for clique sizes 3 to 53, after which no minor-embedding could be found.
The quantum annealer found the optimal solution in \emph{all} instances at least once.
\autoref{fig:qpu_results} shows on the left axis the end to end time of solving the problem.
Since the actual time spend on the QPU was the same for all these problems, almost all this time is spend in finding an appropriate minor-embedding.
On the right axis we see the embedding size as a function of clique size.
Interestingly enough, no minor-embedding could be found for a clique size larger then 53.
This is unexpected as the  minor-embedding for a clique size of 3974, while the D-Wave Advantage system has 5640 qubits.

\begin{figure}
\centering
\begin{subfigure}{.4\textwidth}
  \centering
  \include{images/clique_graphs}
  \caption{Shape of the graphs used for the QPU results.}
  \label{fig:clique_graphs}
\end{subfigure}%
\begin{subfigure}{.6\textwidth}
  \centering
  \begin{tikzpicture}
\begin{axis}[
  axis y line*=left,
  xmin=3, xmax=53,
  ymin=0, ymax=800,
  xlabel={Clique Size [\#Nodes]},
  ylabel={E2E Time [s]},
]
\addplot[smooth,mark=x,red]
  coordinates{
    (3, 0.19649)
    (4, 0.37886)
    (5, 0.53844)
    (6, 0.59578)
    (7, 0.92738)
    (8, 1.33093)
    (9, 2.93722)
    (10,2.35356)
    (11,3.69271)
    (12,4.11662)
    (13,5.37846)
    (14,5.53185)
    (15,9.43014)
    (16,14.3543)
    (17,23.7875)
    (18,19.5713)
    (19,33.9270)
    (20,14.4864)
    (21,12.5786)
    (22,17.5623)
    (23,21.9316)
    (24,20.0077)
    (25,21.9526)
    (26,22.6800)
    (27,49.8193)
    (28,18.2332)
    (29,70.3598)
    (30,70.5463)
    (31,34.3013)
    (32,60.5408)
    (33,115.227)
    (34,88.4286)
    (35,123.205)
    (36,79.8793)
    (37,171.372)
    (38,103.147)
    (39,99.6073)
    (40,106.470)
    (41,140.297)
    (42,262.107)
    (43,324.605)
    (44,155.159)
    (45,229.293)
    (46,172.335)
    (47,211.820)
    (48,345.804)
    (49,727.947)
    (50,150.900)
    (51,215.793)
    (52,318.288)
    (53,320.995)
}; \label{plot_one}

\end{axis}

\begin{axis}[
  axis y line*=right,
  axis x line=none,
  xmin=3, xmax=53,
  ymin=0, ymax=5000,
  ylabel={Embedding Size [\#Qubits]},
  legend pos = north west
]
\addlegendimage{/pgfplots/refstyle=plot_one}\addlegendentry{E2E Time}
\addplot[smooth,mark=*,blue]
  coordinates{
    (3,48)
    (4,68)
    (5,82)
    (6,105)
    (7,123)
    (8,152)
    (9,173)
    (10,226)
    (11,263)
    (12,272)
    (13,319)
    (14,367)
    (15,376)
    (16,418)
    (17,461)
    (18,555)
    (19,544)
    (20,681)
    (21,759)
    (22,758)
    (23,856)
    (24,998)
    (25,998)
    (26,1117)
    (27,1204)
    (28,1336)
    (29,1265)
    (30,1346)
    (31,1486)
    (32,1581)
    (33,1548)
    (34,1628)
    (35,1695)
    (36,1900)
    (37,1975)
    (38,2035)
    (39,2370)
    (40,2294)
    (41,2370)
    (42,2478)
    (43,2634)
    (44,2762)
    (45,2964)
    (46,3085)
    (47,3329)
    (48,3401)
    (49,3470)
    (50,3559)
    (51,3841)
    (52,3827)
    (53,3974)
}; \addlegendentry{Embedding Size}
\end{axis}

\end{tikzpicture}

  \caption{On the left axis the E2E time versus clique size. The right axis shows the embedding size (in qubits) versus clique size.}
  \label{fig:qpu_results}
\end{subfigure}
\caption{A figure with two subfigures}
\label{fig:test}
\end{figure}
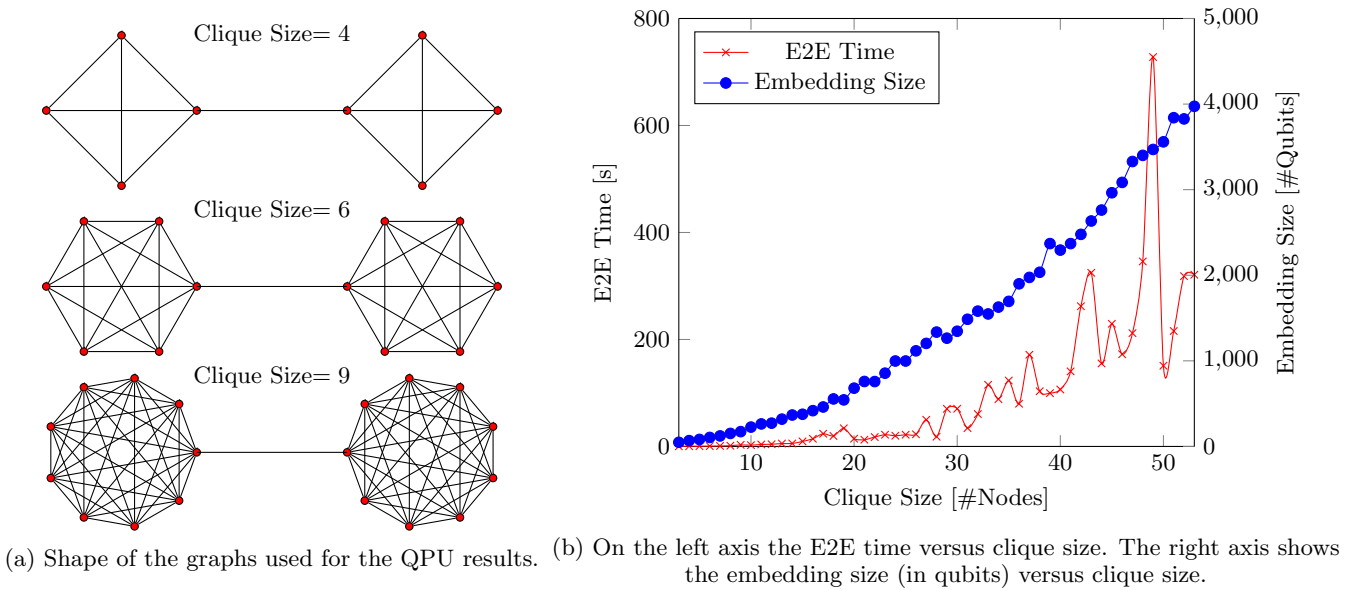

\subsection{German transmission power network data set}\label{sec:germany-results}

Next we performed a benchmarking analysis on a large size energy network. For this analysis, we considered the transmission power data obtained with the open-source reference model of European transmission networks, SciGRID \cite{SciGRIDv0.2}, which is based on the raw transmission
data available in openstreetmap.org \footnote{OpenStreetMap: https://www.openstreetmap.org/}. 

In particular, we performed the analysis on the SciGRID model output tables with vertices and links obtained from running the model on Germany raw data. The SciGRID output tables are: the table \textit{vertices}, which contains the geographical center positions of German electrical substations, together with information on voltage level, frequency, name and operator; the table \textit{links}, which contains the connections between two substations, with information on properties of the transmission line, e.g., voltage, number of cables in the circuit, resistance, and maximum current. 

Given their structure, the SciGRID output tables can be used to easily build the graph of the German transmission power network (\autoref{fig:german-network}) and define the mapping to the problem in Eqs.~(\ref{eq:obj_original}-\ref{eq:balancing_constraint}) and Eqs.~(\ref{eq:obj_original2}-\ref{eq:balancing_constraint2}), where the table \textit{vertices} is used to define the nodes and the table \textit{links} defines the edges of the transmission network. 

Additionally, for the complete definition of the optimization problem, we need to choose the values of the cost parameters $\alpha$ and $\beta$, and the electricity surplus at each node. These parameters are not defined or available in the SciGRID data set. 

In \cite{Tanjo2016GraphPO} two different sets of cost parameters are considered, $(\alpha, \beta)=(1,1)$ and $(\alpha, \beta)=(1,10)$. Since the focus of the present analysis is a benchmark of the quantum optimization approaches, the specific value of these parameters is not relevant. Thus, for the sake of the analysis, we choose $\alpha$ and $\beta$ to be respectively 1 and 10. 

Finally, the electricity surplus at each node (i.e., the weight of the node) is drawn from a uniform $[0,1)$ distribution. In fact, an estimation of a realistic electricity surplus is out of scope for this paper and does not add value to the benchmark, as it would only influence the choice of the hyper-parameters and the value of the objective function.
In our test data, the average surplus across all nodes was approximately $0.49$.
In order to make the problem sufficiently difficult, the threshold was set to $k=0.5$.

\begin{figure}
\centering
\includegraphics[width=0.5\textwidth]{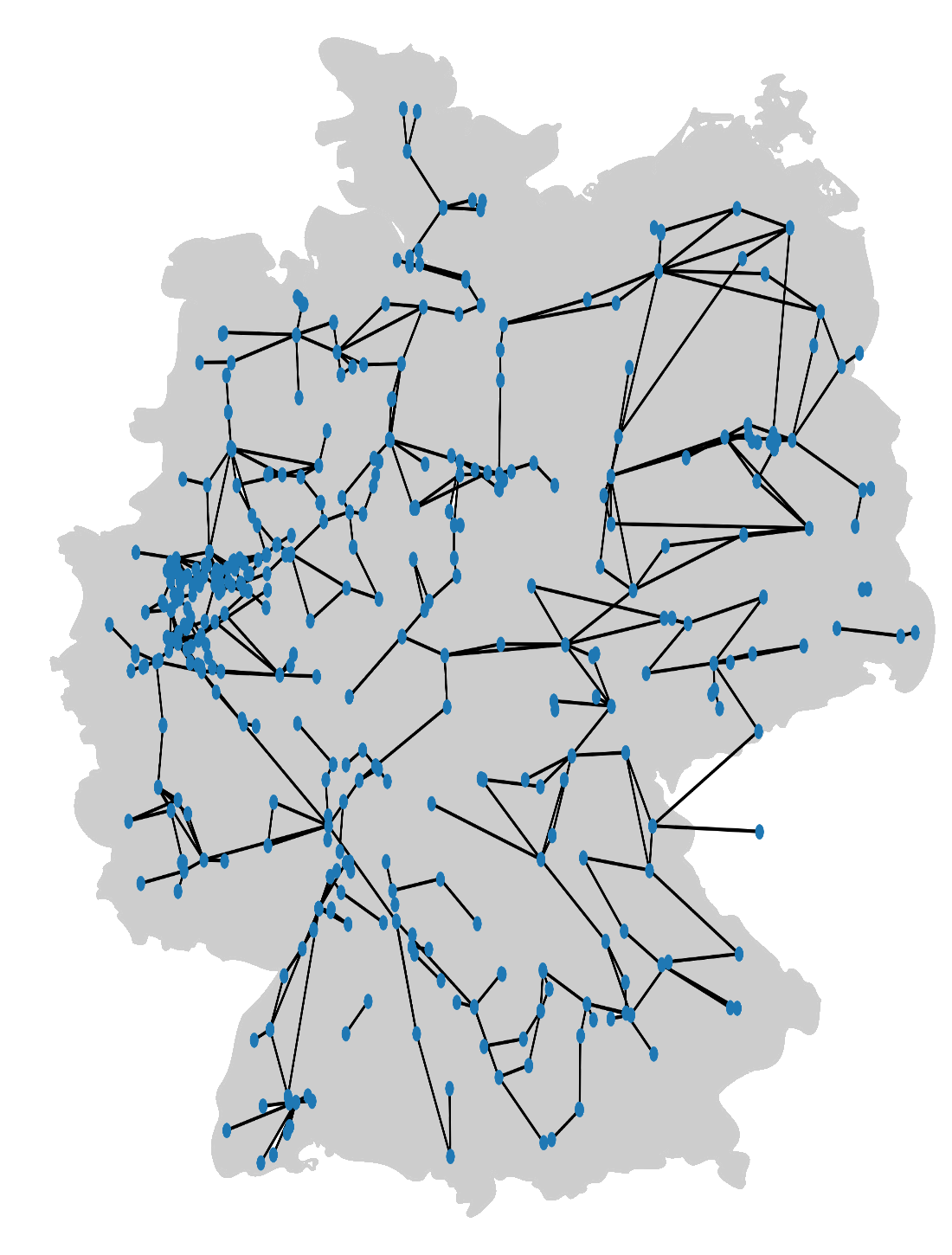}
\caption{Graph derived from the SciGRID German transmission power network.}\label{fig:german-network}
\end{figure}

\autoref{tab:results-hyperparameters} shows the hyper-parameters obtained with the grid-search algorithm outlined in \autoref{sec:qubo-hyperparameters} for the case of the German transmission power network. $K=10$ was chosen, because it was found to be sufficiently large to detect small errors in the balancing constraint, whilst being small enough to be safe in terms of overflow values.
We defined eight different graph partitioning problems, corresponding to different number of partitions (2 to 9) and calculated the hyperparameters for each of these problems. Note that the higher the number of partitions, the higher the number of variables involved becomes, which also drives up the computational time.

\begin{table}[htbp]
    \centering
    \begin{tabular}{rrrr}
    \toprule
    $P$ & $\lambda_{oh}$ & $\lambda_{bc}$\\
    \midrule
    2 &  500 & 0.01\\
    3 &  300 & 0.08\\
    4 &  270 & 0.2\\   
    5 &  250 & 0.2\\   
    6 &  240 & 0.4\\   
    7 &  230 & 0.9\\   
    8 &  230 & 0.9\\   
    9 &  290 & 1.0\\
    \bottomrule
    \end{tabular}
    \caption{Hyper parameters for the QUBO of the German network for different partition sizes $P$, where $\alpha=1$, $\beta=10$, $k=0.5$ and $K=10$}
    \label{tab:results-hyperparameters}
\end{table}

\autoref{table:results} shows the comparison of the minimum of the objective function found by different solvers used to solve the graph partitioning problem on the German transmission power network.

For this comparison we first ran the problem on the CQM and BQM hybrid solvers. Then, we implemented the same binary quadratic problem in the Microsoft QIO framework and solved it by means of Parallel Tempering (PT), Simulated Annealing (SA), Substochastic Monte Carlo (MC) and Tabu Search (TS).

Parallel Tempering and Simulated Annealing map the problem onto a thermodynamic system and search for the optimum by exchanging configurations at different temperatures. Substochastic Monte Carlo is a diffusion Monte Carlo algorithm inspired by adiabatic quantum computation. Tabu Search is a metaheuristic optimization approach based on a local search method, which looks at neighboring configurations to move across the solution space.

For the sake of a fair comparison, all approaches, hybrid and quantum-inspired, are run with a time limit of 10 seconds. Table \ref{table:results} shows the objective value of the optimal solution, thus it provides an estimation of the quality of the solution. The column \textit{P} indicates the number of partitions in which the network is clustered. 

Clearly the BQM and CQM implementations overperform the quantum-inspired methods in terms of quality. For all numbers of partitions, the CQM provides the lowest objective value. 

Additionally, all quantum-inspired methods show a deterioration in the quality of the solution for large number of partitions. This is clearly visible in the behavior of the objective function value for different partition sizes shown in \autoref{fig:solution-deterioration}. In fact, the value of the objective function at the minimum for the classical solvers shows a large variability across different partition sizes, compared to the hybrid solvers (CQM and BQM). For PT, SA, MC and TS the objective function explodes for partition size larger than 5. This is probably due to the fact that hybrid solvers can handle a larger number of variables and constraints.

\begin{table}[htbp]
\centering
    \begin{tabular}{rrrrrrr}
    \toprule
    $P$ & CQM & BQM & PT & SA & MC & TS \\
    \midrule
    2 & 101,362 & 102,232 & 103,022 & 102,842 & 103,845 & 103,484\\
    3 &  74,824 &  75,350 &  78,035 &  78,142 &  77,376 &  77,180\\
    4 &  64,272 &  67,196 &  69,082 &  69,135 &  67,999 &  70,247\\
    5 &  60,130 &  63,762 &  \textbf{65,088} &  \textbf{65,036} &  \textbf{66,218} &  \textbf{67,133}\\
    6 &  \textbf{59,190} &  \textbf{63,110} &  80,095 &  78,808 &  83,875 &  72,645\\
    7 &  60,048 &  64,240 & 136,392 & 139,403 & 134,969 &  79,644\\
    8 &  62,094 &  66,386 & 150,925 & 154,054 & 199,356 &  85,948\\
    9 &  64,882 &  69,180 & 174,162 & 178,692 & 325,600 & 115,515\\
    \bottomrule
    \end{tabular}
\caption{Objective function value for the German power network with different solvers. Each solver has a time limit of 10 seconds.}\label{table:results}
\end{table}

\begin{figure}
\centering
\includegraphics[width=0.75\textwidth]{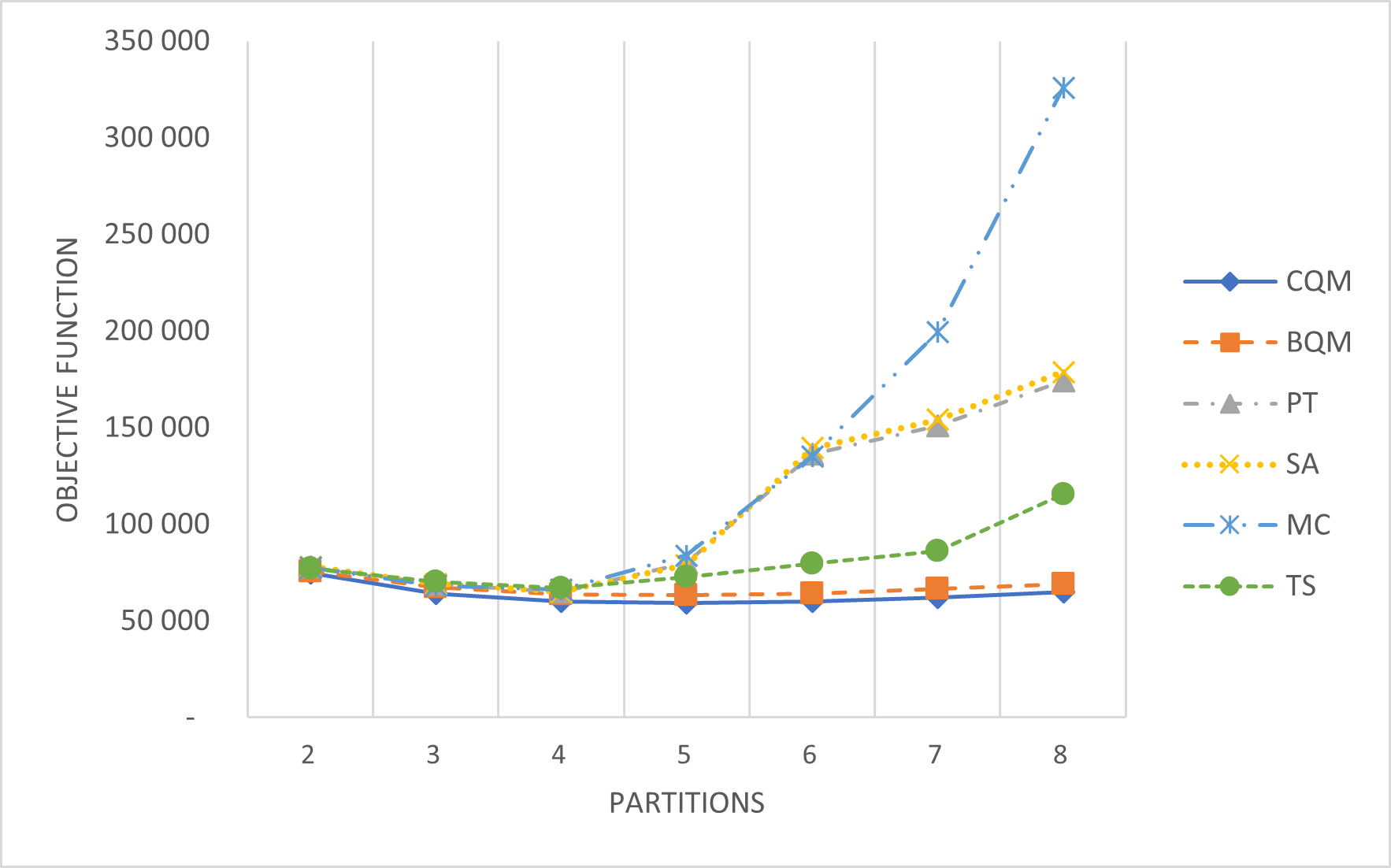}
\caption{Change in objective function value for different solvers.}\label{fig:solution-deterioration}
\end{figure}

\begin{figure}
\centering
\includegraphics[width=0.5\textwidth]{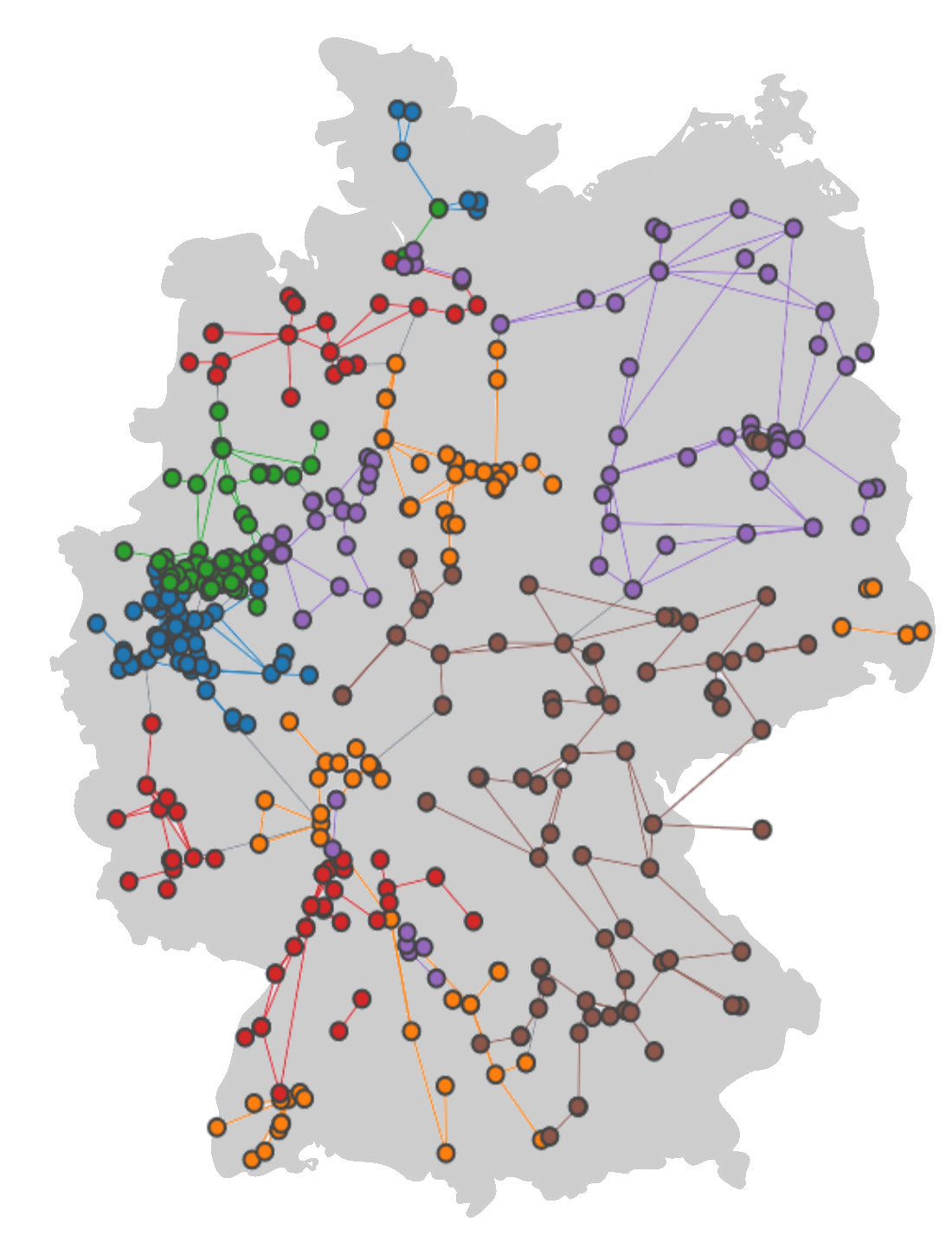}
\caption{Best solution found by the Hybrid CQM Sampler (P=6).}\label{fig:german-network-cqm}
\end{figure}

\autoref{fig:german-network-cqm} shows an example of solution obtained with the Hybrid CQM sampler for $P=6$ partitions. 
Note that some clusters include points that are far from each other. This is the result of the simplification that the edge cost, $\beta$ in \autoref{eq:obj_original}, is a constant. A possible improvement of this assumption would be to define $\beta$ as a function of the geographical distance between nodes.






\section{Conclusions}\label{sec:conclusions}

In this paper we showed how to optimize an energy network with quantum annealing. 

Following \cite{Tanjo2016GraphPO}, we mapped the optimization problem to a graph partitioning problem, in which the optimal set of communities of power substations are determined to share the energy surplus across the network. 

The problem was defined mathematically as a QUBO problem, both with and without electricity sharing across the network. A grid-search approach to determine the optimal hyperparameters of the optimization problem was also introduced. 

The model and the implementation for the case of no-electricity sharing were tested both on a real QPU and on hybrid solvers. The test were carried on small-size random graphs and the results were verified with an exhaustive search algorithm (exact solution).

For the numerical results, we considered a real-size problem, using the German transmission power network data. The optimization problem was implemented for the case of no-electricity sharing and run on the D-Wave hybrid CQM and BQM solvers. 

The results from the quantum solvers were compared to those of classical approaches available on Azure, i.e. Parallel Tempering, Simulated Annealing, Substochastic Monte Carlo and Tabu Search present in the Microsoft QIO framework.

The numerical results showed that, for the same computation time, the quantum approaches (CQM and BQM) outperform the classical algorithms in terms of quality of the solution, as the value of the objective function of the \textit{quantum} solutions is found to be always lower than with the classical approaches across a set of different problem size. 
Further, the quality of the solution of the classical approaches deteriorates with increasing problem complexity (number of variables) much faster than with the quantum approaches. The latter effect is an indication that hybrid-quantum approaches might be a more appropriate tool to represent and solve large-size optimization problems.

Additional investigation is required to implement the optimization problem with electricity sharing and verify that the computational advantage shown in this paper holds for the electricity sharing problem as well. Further, a comparison with other classical solvers that are currently used for real business applications (e.g., Gurobi or COIN-OR Couenne) would bring even more confidence. We reserve these and other investigations to future work. 

The goal of the paper was to show a current possibility of using quantum optimization in the energy industry and benchmark the results with existing classical approaches. Given the current geopolitical situation, and the related importance of energy sources at the moment, we hope to inspire researchers and companies to validate our findings and test this approach on other use cases in the energy sector.

\bibliography{bibtex}{}
\bibliographystyle{unsrt}

\appendix
\section{Approximate penalty function for inequality constraint}\label{sec:proofs}
The appendix focuses on the approximate penalty function introduced in \autoref{sec:ineq_penalty}.
The first section will show how this penalty function was derived.
The second section will provide a deeper analysis of this penalty function and its properties.

\subsection{Derivation on the approximate penalty function}
In this section we will focus on the derivation of the penalty function introduced in \autoref{sec:ineq_penalty}.
The goal is to construct a penalty function that favours values of $\bm{x}$ that comply with the constraint
\begin{equation}
\label{eq:app_constraint}
\bm{a}^\T\bm{x} \le b
\end{equation}
over values of $\bm{x}$ that violate the constraint.
Let $c$ be the tight lower bound of $\bm{a}^\T\bm{x}$.
Since $\bm{x}$ is a binary vector and $\bm{a}$ and $b$ are constants, such a $c$ always exists.
Furthermore, $c$ can be easily found.
If all values in $\bm{a}$ are non-negative, then the minimum of $\bm{a}^\T\bm{x}$ is when $\bm{x}=\bm{0}$ and therefore $c=0$.
In the case that $\bm{a}$ contains negative values, then $\bm{a}^\T\bm{x}$ is minimized when $\bm{x}$ is one for all elements corresponding with negative values in $\bm{a}$ and zero otherwise.
In this case we have that $c$ is the sum of all negative values.
Using this lower bound $c$ we can write \autoref{eq:app_constraint} as
\begin{equation}
    c \le \bm{a}^\T\bm{x} \le b.
\end{equation}
Next, we subtract $c$ from all sides to produce
\begin{equation}
\label{eq:intermediate_constraint}
0 \le \bm{a}^\T\bm{x} -c \le b-c.
\end{equation}
If $\bm{a}$, $b$ and $c$ are integer valued, one could replace the inequality constraint with an equality constraint by adding slack variables, which results in:
\begin{equation}
\label{eq:bad}
\bm{a}^\T\bm{x}-c = \sum_{i=0}^{I-2}2^iz_i + z_{I-1}(2^I-1 - b + c) \Rightarrow P_\text{int}(\bm{x},\bm{z}) = \left(\bm{a}^\T\bm{x}-c - \sum_{i=0}^{I-2}2^iz_i - z_{I-1}(2^I-1 - b + c)\right)^2.
\end{equation}
However, when $\bm{a}$, $b$ or $c$ are arbitrary non integer values, there can be $\bm{x}$ that comply with the constraint for which \autoref{eq:bad} does not hold for any $\bm{z}$.
For example, any $\bm{a^T\bm{x}=n-\frac{1}{2}}$, where $n<b$ is an integer, we have that the smallest value of $P_\text{int}$ is $\frac{1}{4}$.
Furthermore, when $\bm{a^T\bm{x}=b+\frac{1}{2}}$, which clearly violates the constraint, then there exists a $\bm{z}$, such that $P_\text{int}=\frac{1}{4}$.
Hence, $P_\text{int}$ cannot differentiate well between values that comply with the constraint and values that violate the constraint.
Therefore, if $\bm{a}$, $b$ or $c$ are arbitrary real valued constants, then \autoref{eq:bad} is not a good approximation of the constraint in \autoref{eq:app_constraint}.

We can overcome the problem by multiplying both sides by $\frac{2^K-1}{b-c}$, where $K$ is an integer valued constant, to produce
\begin{equation}
0 \le \frac{2^K-\frac{1}{2}}{b-c}\left(\bm{a}^\T\bm{x} - c\right) \le 2^K-\frac{1}{2}.
\end{equation}
The equation above can then be \emph{approximated} with the equality
\begin{equation}
    \frac{2^K-\frac{1}{2}}{b-c}\left(\bm{a}^\T\bm{x} - c\right) = \sum_{i=0}^{K-1}2^i z_i.
\end{equation}
Using the technique described in \autoref{sec:eq_penalty}, we can transform this equality constraint to the final penalty function:
\begin{equation}
\label{eq:app_final}
P(\bm{x},\bm{z}) = 
\left( \frac{2^K-\frac{1}{2}}{b-c}\left(\bm{a}^\T\bm{x} - c\right) - \sum_{i=0}^{K-1}2^i z_i\right)^2.
\end{equation}
In the next section we will prove that the penalty function in \autoref{eq:app_final} does indeed favour values of $\bm{x}$ that comply with the constraint over values that violate it.

\subsection{Analysis of the approximate penalty function}
In this section we will provide some deeper analysis on the approximate penalty function introduced in \autoref{sec:ineq_penalty}.
The goal of penalty function is to favour values of $x\in\{0,1\}^n$ that comply with
\begin{equation}
    \bm{a}^\T\bm{x} \le b,
\end{equation}
as opposed to values of $\bm{x}$ that violate the constraint above.
We start by showing that $\bm{x}$ meets the constraint, if and only if the there is a $\bm{z}$ such that $P(\bm{x},\bm{z})$ is between $0$ and $\frac{1}{4}$.
\begin{theorem}
\label{thm:penalty}
Let $(\bm{a}, b) \in \R^{n+1}$ defining the set $F = \{\bm{x}\in\{0,1\}^n \ | \ \bm{a}^\T\bm{x}\le b\}$.
Let $c=\min F$ be a lower bound of $\bm{a}^\T\bm{x}$ and
\[
P(\bm{x},\bm{z}) = 
\left( \frac{2^K-\frac{1}{2}}{b-c}\left(\bm{a}^\T\bm{x} - c\right) - \sum_{i=0}^{K-1}2^i z_i\right)^2,
\]
where $K\in \N$ and $\bm{z}\in\{0,1\}^{K}$.
Then, for $\bm{x}\in \{0,1\}^n$ the following statements are equivalent:
\begin{enumerate}
\item $\bm{x}\in F$.
\item There exists a $\bm{z}\in\{0,1\}^K$ such that $P(\bm{x},\bm{z}) \le \frac{1}{4}$.
\end{enumerate}
\end{theorem}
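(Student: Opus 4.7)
The plan is to reduce the whole statement to a single geometric observation about how the binary sums $S(\bm{z}) := \sum_{i=0}^{K-1} 2^i z_i$ cover a real interval. I would begin by changing variables to $t(\bm{x}) := \frac{2^K - \frac{1}{2}}{b-c}(\bm{a}^\T\bm{x} - c)$, so that $P(\bm{x},\bm{z}) = (t(\bm{x}) - S(\bm{z}))^2$. Two facts then do all the work: (i) as $\bm{z}$ ranges over $\{0,1\}^K$, the value $S(\bm{z})$ takes exactly the integer values $\{0,1,\ldots,2^K-1\}$ (standard binary expansion), and (ii) $t$ is an increasing affine function of $\bm{a}^\T\bm{x}$ sending $c \mapsto 0$ and $b \mapsto 2^K - \tfrac{1}{2}$.

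For the implication $1 \Rightarrow 2$, I would assume $\bm{x} \in F$, which via (ii) puts $t(\bm{x})$ in the interval $[0, 2^K - \tfrac{1}{2}]$. Because consecutive elements of $\{0,1,\ldots,2^K-1\}$ are spaced by $1$ and this interval extends at most $\tfrac{1}{2}$ beyond the largest such integer, the nearest integer $m \in \{0,\ldots,2^K-1\}$ to $t(\bm{x})$ is at distance at most $\tfrac{1}{2}$. Taking $\bm{z}$ to be the binary expansion of $m$ (guaranteed by (i)) yields $P(\bm{x},\bm{z}) \le \tfrac{1}{4}$.

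For the converse $2 \Rightarrow 1$, I would argue by contrapositive. If $\bm{x} \notin F$, then $\bm{a}^\T\bm{x} > b$, so $t(\bm{x}) > 2^K - \tfrac{1}{2}$. Since every admissible $S(\bm{z})$ is an integer bounded above by $2^K - 1$, the gap satisfies $t(\bm{x}) - S(\bm{z}) > (2^K - \tfrac{1}{2}) - (2^K - 1) = \tfrac{1}{2}$, and squaring gives $P(\bm{x},\bm{z}) > \tfrac{1}{4}$ for every $\bm{z}$.

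I do not expect any real obstacle; the argument is essentially a sharpness check on the constants. The one place to be careful is the boundary value $t(\bm{x}) = 2^K - \tfrac{1}{2}$, which corresponds to the tight case $\bm{a}^\T\bm{x} = b$: here $m = 2^K - 1$ achieves $|t - m| = \tfrac{1}{2}$ with equality, confirming that the non-strict inequality $P \le \tfrac{1}{4}$ in statement 2 is exactly the right threshold and matches the strict inequality $P > \tfrac{1}{4}$ obtained in the contrapositive. The multiplier $\frac{2^K - \frac{1}{2}}{b-c}$ is seen retrospectively as the unique rescaling that makes the feasible image $[0, 2^K - \tfrac{1}{2}]$ just barely coverable by the integer grid to accuracy $\tfrac{1}{2}$, while pushing any infeasible $t$ strictly past that grid.
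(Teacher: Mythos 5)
Your proposal is correct and takes essentially the same approach as the paper: the same rescaling of $\bm{a}^\T\bm{x}-c$ into the interval $[0,\,2^K-\tfrac{1}{2}]$ together with the identification of $\sum_{i=0}^{K-1}2^i z_i$ with the integer grid $\{0,\ldots,2^K-1\}$ gives the forward direction, and your contrapositive argument for $2\Rightarrow 1$ is just a rephrasing of the paper's direct rearrangement using $\hat{z}\le 2^K-1$. No gaps.
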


\begin{proof}
Since $\sum_{i=0}^{K-1}2^i z_i$ is a bijection from $\{0,1\}^K$ to $\{0,1,\ldots,2^K-1\}$, we shall replace $\sum_{i=0}^{K-1}2^i z_i$ by \mbox{$\hat{z}\in \{0,1,\ldots,2^K-1\}$} in this proof.
First we will proof $(1)\Rightarrow(2)$.
Suppose that $\bm{x} \in F$, then $\bm{a}^\T\bm{x}\le b$ and we have
\[
\frac{2^K-\frac{1}{2}}{b-c}\left(\bm{a}^\T\bm{x} - c\right)
\le \frac{2^K-\frac{1}{2}}{b-c}\left(b - c\right)
= 2^K-\frac{1}{2}.
\]
Because $c$ is a lower bound of $\bm{a}^\T\bm{x}$ we know
\[
\frac{2^K-\frac{1}{2}}{b-c}\left(\bm{a}^\T\bm{x} - c\right)
\ge \frac{2^K-\frac{1}{2}}{b-c}\left(c - c\right)
= 0.
\]
Define $\frac{2^K-\frac{1}{2}}{b-c}\left(\bm{a}^\T\bm{x} - c\right) = \hat{x}\in [0, 2^K-\frac{1}{2}]$, then
\[
\min_{\bm{z}\in\{0,1\}^K}P(\bm{x},\bm{z})
= \min_{\hat{z}\in \{0,1, \ldots, 2^K-1\}}( \hat{x} -\hat{z})^2
\le \frac{1}{4}.
\]
Hence, there is a $\bm{z}\in\{0,1\}^K$ such that $ P(\bm{x},\bm{z}) \le \frac{1}{4}$.
\\
\\
Next, we will complete the proof by showing $(2)\Rightarrow(1)$.
Let $\bm{x}\in \{0,1\}^n$ and suppose that we have $\bm{z}\in \{0,1\}^K$ such that $P(\bm{x}, \bm{z})\le \frac{1}{4}$.
Therefore,
\[
\frac{2^K-\frac{1}{2}}{b-c}\left(\bm{a}^\T\bm{x} - c\right) - \hat{z}\le \frac{1}{2}\]
Next we add $\hat{z}$ to both sides of the equation, multiply both sides by $\frac{b-c}{2^K-\frac{1}{2}}$ and finally add $c$ to both sides, which produces
\[\bm{a}^\T\bm{x} \le c + (\frac{1}{2}+\hat{z})\frac{b-c}{2^K-\frac{1}{2}}\]
Rewriting the right hand side gives
\[\frac{c(2^K-\frac{1}{2})+(\frac{1}{2}+\hat{z})(b-c)}{2^K-\frac{1}{2}}\]
Since $\hat{z} \in \{0, 1, \ldots, 2^K-1\}$ we know $\hat{z} \le 2^K-1$, hence
\[
\frac{c(2^K-\frac{1}{2})+(\frac{1}{2}+z)(b-c)}{2^K-\frac{1}{2}}
\le \frac{c(2^K-\frac{1}{2})+(2^K-\frac{1}{2})(b-c)}{2^K-\frac{1}{2}}
= b.
\]
Therefore, $\bm{a}^\T\bm{x} \le b$ and we are done.
\end{proof}

The consequence of \autoref{thm:penalty} is that $P(\bm{x},\bm{z})$ distinguishes between values of $\bm{x}$ that meet the constraint and that violate the constraint.
For a fixed $\bm{x}$ we have that if $\bm{x}$ meets the constraint, then the minimum of the penalty function is between $0$ and $\frac{1}{4}$.
On the other hand, if $\bm{x}$ violates the constraint, then the minimum of the penalty function is strictly larger then $\frac{1}{4}$. 

In the following proposition we shall show that this property is a direct consequence of the  constant $\frac{2^K-\frac{1}{2}}{b-c}$ in the definition of the penalty function.
\begin{proposition}\label{prop:unique}
Let $\bf{a}$, $b$, $c$ and $F$ be the same as defined in \autoref{thm:penalty}.
Suppose $P(\bm{x}, z) = \big(\alpha_K(\bm{a}^\T\bm{x}-c) - z\big)^2$, where \mbox{$z\in\{0,1,\ldots, 2^K-1\}$}.
Then the following statements are equivalent:
\begin{enumerate}
\item $\bm{x}\in F$.
\item There exists a $z$ such that $P(\bm{x},z) \le \frac{1}{4}$.
\end{enumerate}
if and only if $\alpha_K = \frac{2^K-\frac{1}{2}}{b-c}$.
\end{proposition}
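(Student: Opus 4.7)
The plan is to leverage \autoref{thm:penalty} for the \emph{if} direction (which is immediate, since the prescribed $\alpha_K$ is precisely the one covered there) and then handle the \emph{only if} direction by reducing statement~(2) to a single affine inequality on $\bm{a}^\T\bm{x}$ and comparing that halfspace with $F$.

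For the reduction, I would observe that for fixed $\bm{x}$ the minimum of $(\alpha_K(\bm{a}^\T\bm{x}-c)-z)^2$ over $z\in\{0,1,\ldots,2^K-1\}$ is at most $\tfrac{1}{4}$ iff $\alpha_K(\bm{a}^\T\bm{x}-c)$ lies within $\tfrac{1}{2}$ of some admissible integer. Writing the union of those unit-length neighbourhoods as $[-\tfrac{1}{2},\,2^K-\tfrac{1}{2}]$, and using that $c$ is a lower bound of $\bm{a}^\T\bm{x}$ together with $\alpha_K>0$, statement~(2) collapses to the single condition $\bm{a}^\T\bm{x}\le c+(2^K-\tfrac{1}{2})/\alpha_K=:b'$.

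The equivalence $(1)\Leftrightarrow(2)$ then reads $\{\bm{a}^\T\bm{x}\le b\}=\{\bm{a}^\T\bm{x}\le b'\}$ on $\{0,1\}^n$, from which I would deduce $b=b'$ (equivalently $\alpha_K=(2^K-\tfrac{1}{2})/(b-c)$) by exhibiting a witness $\bm{x}$ inside the gap whenever $b\neq b'$: if $b'>b$, pick $\bm{x}$ with $b<\bm{a}^\T\bm{x}\le b'$ to refute $(2)\Rightarrow(1)$; if $b'<b$, pick $\bm{x}$ with $b'<\bm{a}^\T\bm{x}\le b$ to refute $(1)\Rightarrow(2)$.

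The main obstacle is exactly this last step: in the discrete cube $\{0,1\}^n$, a witness inside an arbitrary real gap need not exist for a given $\bm{a}$, so a narrow band of $\alpha_K$ could in principle satisfy the equivalence for a particular instance. The cleanest remedy is to read the proposition uniformly over admissible data $(\bm{a},b)$—as befits the role of $\alpha_K$ in \autoref{sec:ineq_penalty}, where the same penalty must handle arbitrary constraints—so that any deviation from the prescribed $\alpha_K$ admits some $(\bm{a},b)$ for which the gap argument produces an explicit counterexample. This uniformity reading is, I expect, the intended one and matches the canonical character of the scaling constant.
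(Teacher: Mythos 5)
Your proposal is sound in substance and reaches the same conclusion, but it is organized differently from the paper's argument. The paper proves the ``only if'' part by a four-way case split on $\alpha_K$ (namely $\alpha_K>\frac{2^K-\frac12}{b-c}$, $0<\alpha_K<\frac{2^K-\frac12}{b-c}$, $\alpha_K=0$, $\alpha_K<0$), in each case exhibiting a value of $\bm{a}^\T\bm{x}$ (e.g.\ $\bm{a}^\T\bm{x}=b$, $\bm{a}^\T\bm{x}=b+\frac{1}{\alpha_K}$, $\bm{a}^\T\bm{x}=c-\frac{1}{\alpha_K}$) for which one of the two implications fails. You instead collapse statement (2), for $\alpha_K>0$, into the single half-space condition $\bm{a}^\T\bm{x}\le b':=c+(2^K-\tfrac12)/\alpha_K$ (the union of the unit neighbourhoods of $0,1,\ldots,2^K-1$ being $[-\tfrac12,\,2^K-\tfrac12]$, with the lower end automatic since $c$ is a lower bound) and then compare thresholds $b$ and $b'$. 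This is a cleaner way to see why the prescribed $\alpha_K$ is the unique correct scaling, and it buys a uniform treatment of the two positive-$\alpha_K$ cases at once; the paper's case analysis buys explicit witnesses and covers the degenerate signs. The obstacle you flag is genuine and, notably, is present in the paper's own proof as well: choosing $\bm{x}$ with $\bm{a}^\T\bm{x}=b$ or $\bm{a}^\T\bm{x}=b+\frac{1}{\alpha_K}$ presupposes that such binary points exist for the given $\bm{a}$, which for a fixed instance on $\{0,1\}^n$ need not hold (indeed a whole interval of $\alpha_K$ values can satisfy the equivalence for a particular $(\bm{a},b)$). So your resolution---reading the uniqueness claim uniformly over admissible data $(\bm{a},b)$, i.e.\ as a statement about the penalty construction rather than about one instance---is exactly the reading under which both your argument and the paper's go through.

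One incompleteness to repair before your write-up matches the full claim: your reduction to $\bm{a}^\T\bm{x}\le b'$ uses $\alpha_K>0$, so you must still dispose of $\alpha_K\le 0$. This is quick within your own framework: for $\alpha_K=0$ the choice $z=0$ gives $P(\bm{x},z)=0\le\frac14$ for every $\bm{x}$, so (2) holds even for $\bm{x}\notin F$; for $\alpha_K<0$ the quantity $\alpha_K(\bm{a}^\T\bm{x}-c)$ is nonpositive, so (2) collapses instead to $\bm{a}^\T\bm{x}\le c+\frac{1}{2|\alpha_K|}$, a threshold strictly smaller than $b$ in general, and the same gap-witness argument (under the uniform reading) refutes the equivalence, matching the paper's cases (iii) and (iv).
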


\begin{proof}
The first direction of the proof is the case of \autoref{thm:penalty}.
For the other direction we divide the problem into four cases: (i) $\alpha_K > \frac{2^K-\frac{1}{2}}{b-c}$, (ii) $0 < \alpha_K < \frac{2^K-\frac{1}{2}}{b-c}$, (iii) $\alpha_K=0$ and (iv) $\alpha_K < 0$.
\\
\\
We start with the case (i) $\alpha_K > \frac{2^K-\frac{1}{2}}{b-c}$.
Suppose $\bm{x}^\T\bm{a}=b$, then $\bm{x}\in F$.
Therefore,
\[
\alpha(\bm{a}^\T\bm{x}-c) -z
> \frac{2^K-\frac{1}{2}}{b-c}(\bm{a}^\T\bm{x}-c) - z
= 2^K-\frac{1}{2} - z
\ge 2^K-\frac{1}{2} - (2^K-1) 
= \frac{1}{2}.
\]
Hence, for all $z\in \{0,1,\ldots,2^K-1\}$ we have $P(\bm{x},z)>\frac{1}{4}$.
\\
\\
For case (ii) $0 < \alpha_K < \frac{2^K-\frac{1}{2}}{b-c}$ we set $\bm{a}^T\bm{x}=b+\frac{1}{\alpha_K}$.
Clearly, $\bm{x}\notin F$, but we will show that there is a $z\in \{0,1,\ldots,2^K-1\}$ such that $P(\bm{x},z)\le \frac{1}{4}$.
By our choice of $\bm{x}$ we have that $\alpha(\bm{a}^\T\bm{x}-c) = \alpha(b-c) + 1$.
Furthermore, by the bounds on $\alpha$ we get $0<\alpha(b-c)<2^K-\frac{1}{2}$.
Combining both gives us
\[
0 < \alpha(\bm{a}^\T\bm{x}-c) < 2^K+\frac{1}{2}.
\]
Hence there exists a $z\in \{0,1,\ldots,2^K-1\}$ such that $P(\bm{x},z) < \frac{1}{4}$.
\\
\\
For case (iii) where $\alpha_K=0$ we take any $x\notin F$ and set $z=0$.
Then $P(\bm{x},z)=0$.
\\
\\
For case (iv) we have $\alpha_K < 0$.
Set $\bm{\alpha}^\T\bm{x} = c - \frac{1}{\alpha_K}$, where we assume that $b\ge c - \frac{1}{\alpha_K}$.
Therefore, $\bm{x}\in F$.
We then see
\[
\max_{z\in \{0,1,\ldots,2^K-1\}} \alpha_K(\bm{a}^T\bm{x}-c) -z
= \alpha_K(\bm{a}^T\bm{x}-c)
= -1
\]
Hence, for all $z\in \{0,1,\ldots,2^K-1\}$ we have $P(\bm{x}, z)\ge 1$.
\end{proof}
During the proof of \autoref{prop:unique} it becomes clear that $\alpha_K < \frac{2^K-\frac{1}{2}}{b-c}$ will not produce a useful penalty function as the penalty function does not distinguish between values of $\bm{x}$ that meet the constrain or ones that violate the constraint.
For the case $\alpha_K > \frac{2^K-\frac{1}{2}}{b-c}$, an argument can be made that it does differentiate between values of $\bm{x}$ that comply with the constraint and $\bm{x}$ that violate the constraint.
However, values of $\bm{x} \in F$, where $\bm{a}^T\bm{x}$ is close to $b$, will have much larger values for $\min_{\bm{z}}P(\bm{x},\bm{z})$, compared to values of $\bm{x}$, where $\bm{a}^\T\bm{x}$ is close to $c$.
Hence, a penalty function where $\alpha_K > \frac{2^K-\frac{1}{2}}{b-c}$ gives a worse approximation then a penalty function with $\alpha_K = \frac{2^K-\frac{1}{2}}{b-c}$.

It important to understand the relative size of the penalty that is given by the penalty function.
We will analyse this behaviour using the following proposition.
\begin{proposition}\label{prop:error_size}
Let $(\bm{a}, b) \in \R^{n+1}$ define the set $F = \{\bm{x}\in\{0,1\}^n \ | \ \bm{a}^\T\bm{x}\le b\}$.
Let $c = \min F$ and
\[
P(\bm{x},\bm{z}) = 
\left( \alpha_K(\bm{a}^\T\bm{x} - c) - \sum_{i=0}^{K-1}2^i z_i\right)^2,
\]
where $K\in \N$, $\bm{z}\in\{0,1\}^{K}$ and $\alpha_K=\frac{2^K-\frac{1}{2}}{b-c}$.
Then for $x\notin F$, 
\[
\min_{z\in\{0,1\}^K}P(\bm{x}, \bm{z}) = \frac{1}{4}+\alpha_k\varepsilon + (\alpha_k\varepsilon)^2,
\]
where $\varepsilon = \bm{a}^\T\bm{x}-b$.
\end{proposition}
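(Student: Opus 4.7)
The plan is to reduce the problem to a one-dimensional minimization over an integer $\hat{z} \in \{0, 1, \ldots, 2^K-1\}$, exploit the fact that $\bm{x} \notin F$ forces the target of this minimization to lie just above the feasible integer range, and then expand the resulting square. The bijection $\bm{z}\mapsto\hat{z}=\sum_{i=0}^{K-1}2^iz_i$ between $\{0,1\}^K$ and $\{0,1,\ldots,2^K-1\}$ (already used in the proof of \autoref{thm:penalty}) makes this reformulation immediate.

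First I would write $\bm{a}^\T\bm{x}-c = (\bm{a}^\T\bm{x}-b) + (b-c) = \varepsilon + (b-c)$, where by hypothesis $\varepsilon > 0$ since $\bm{x}\notin F$. Plugging in $\alpha_K = \frac{2^K-\frac{1}{2}}{b-c}$ yields
\begin{equation*}
\alpha_K(\bm{a}^\T\bm{x}-c) \;=\; \alpha_K\varepsilon + \alpha_K(b-c) \;=\; \alpha_K\varepsilon + 2^K - \tfrac{1}{2}.
\end{equation*}
Hence the minimization becomes
\begin{equation*}
\min_{\hat{z}\in\{0,1,\ldots,2^K-1\}}\bigl(\alpha_K\varepsilon + 2^K - \tfrac{1}{2} - \hat{z}\bigr)^2.
\end{equation*}

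Next I would argue about the optimal $\hat{z}$. Since $\alpha_K\varepsilon > 0$, the target value $\alpha_K\varepsilon + 2^K - \tfrac{1}{2}$ strictly exceeds $2^K - \tfrac{1}{2}$, which already lies above the largest admissible integer $\hat{z}=2^K-1$. Therefore the quantity inside the square is positive and strictly decreasing in $\hat{z}$, so the unique minimizer is $\hat{z}=2^K-1$, giving $\alpha_K\varepsilon + 2^K - \tfrac{1}{2} - (2^K-1) = \alpha_K\varepsilon + \tfrac{1}{2}$. Squaring and expanding,
\begin{equation*}
\bigl(\alpha_K\varepsilon + \tfrac{1}{2}\bigr)^2 \;=\; (\alpha_K\varepsilon)^2 + \alpha_K\varepsilon + \tfrac{1}{4},
\end{equation*}
which is precisely the stated identity.

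There is no genuine obstacle here: the entire argument is algebraic once the correct substitution is made, and the only qualitative step is recognizing that the infeasibility of $\bm{x}$ guarantees the target falls outside the integer grid on the right side, so the minimizer is pinned at the endpoint $\hat{z}=2^K-1$. The mildest care is just to ensure $\varepsilon>0$ strictly (otherwise the minimum could jump to $\hat{z}=2^K-1$ or the next virtual integer $2^K$); this is exactly the content of the hypothesis $\bm{x}\notin F$.
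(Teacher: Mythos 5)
Your proposal is correct and follows essentially the same route as the paper's own proof: since $\bm{x}\notin F$ forces $\alpha_K(\bm{a}^\T\bm{x}-c) > 2^K-\tfrac{1}{2}$, the minimizing slack is $\hat{z}=2^K-1$ (all $z_i=1$), and expanding $\bigl(\tfrac{1}{2}+\alpha_K\varepsilon\bigr)^2$ gives the stated identity. The only cosmetic difference is that you substitute $\varepsilon$ before identifying the minimizer, whereas the paper substitutes afterwards.
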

\begin{proof}
Let $x\not\in F$, then $\bm{a}^\T\bm{x}>b$.
Therefore, we know $\alpha_K(\bm{a}^T\bm{x}-c) > 2^K-\frac{1}{2}$.
Hence, the $P(\bm{x},\bm{z})$ is minimized when all $z_i$ values are 1 and the total sum $\sum_{i=0}^{K-1}2^iz_i$ becomes $2^K-1$.
Thus, we get:
\[
\min_{z\in\{0,1\}^K}P(\bm{x}, \bm{z})
= (\alpha_K(\bm{a}^T\bm{x}-c)-2^K+1)^2.
\]
Next, we substitute $\bm{a}^\T\bm{x}=\varepsilon+b$ and use some basic algebra to produce
\[
\min_{z\in\{0,1\}^K}P(\bm{x}, \bm{z})
= (\alpha_K(b-c) + \alpha_K\varepsilon -2^K+1)^2.
\]
Note that $\alpha_K(b-c)=2^K-\frac{1}{2}$.
Using this property then produces the final result
\[
\min_{z\in\{0,1\}^K}P(\bm{x}, \bm{z})
= (2^K-\frac{1}{2} + \alpha_K\varepsilon -2^K+1)^2
= (\frac{1}{2} + \alpha_K\varepsilon)^2
= \frac{1}{4}+\alpha_k\varepsilon + (\alpha_k\varepsilon)^2
\]
\end{proof}
\autoref{prop:error_size} shows that the penalty given by $P(\bm{x},\bm{z})$ with $\alpha_K=\frac{2^K-\frac{1}{2}}{b-c}$ scales quadratically in both $\alpha_K$ and the amount of violation.
Because $\alpha_K$ scales exponentially in $K$, the sensitivity of the penalty function scales exponentially in $K$.
Hence, if small violations of the penalty occur, one can increase the sensitivity of the penalty function by incriminating $K$.
However, one must be wary to set $K$ to large values when implementing these penalties.
Since $\alpha_K$ scales exponentially in $K$, overflows are almost unavoidable when $K$ is large. 
\end{document}